\newcommand{\algotask}{Algorithm \textsc{TaskScheduling}($b$)\xspace}
\newcommand{\mrs}{Algorithm \textsc{MRS}\xspace}
\newcommand{\mr}{Map-Reduce scheduling\xspace}
\newcommand{\msr}{Map-Shuffle-Reduce scheduling\xspace}
\begin{document}

\title{Scheduling MapReduce Jobs and Data Shuffle on Unrelated Processors\thanks{The authors were partially supported by the European Social Fund and Greek national resources under Thales-DELUGE and Heracleitus II programs. A short extended abstract of this work, including partial results, appeared in EDBT/ICDT 2014 Workshop on Algorithms for MapReduce and Beyond.}}

\author{
Dimitris Fotakis\inst{1}
\and Ioannis Milis\inst{2}
\and Emmanouil Zampetakis\inst{1}
\and Georgios Zois\inst{2,3}
}

\institute{School of Electrical and Computer Engineering, National Technical University of Athens, Greece\\
\email{fotakis@cs.ntua.gr,}
\email{mzampet@corelab.ntua.gr}
\and Department of Informatics, Athens University of Economics and Business, Greece\\
\email{\{milis, georzois\}@aueb.gr}
\and Sorbonne Universit\'{e}s, UPMC Univ Paris 06, UMR 7606, LIP6, F-75005, France\\
\email{\{Georgios.Zois\}@lip6.fr}
}

\maketitle

\begin{abstract}

We propose constant approximation algorithms for generalizations of the Flexible Flow Shop (FFS) problem  which form a realistic model for non-preemptive scheduling in MapReduce systems.  Our results concern the minimization of the total weighted completion time of a set of MapReduce jobs on unrelated processors and improve substantially on the model proposed by Moseley et al. (SPAA 2011) in two directions. First, we consider each job consisting of multiple Map and Reduce tasks, as this is the key idea behind MapReduce computations, and we propose a constant approximation algorithm.
Then,  we introduce into our model the crucial cost of data shuffle phase, i.e., the cost for the transmission of intermediate  data from Map to Reduce tasks. In fact, we model this phase by   an additional set of Shuffle tasks for each job and we manage to keep the same approximation ratio  when they are scheduled on the same processors with the corresponding Reduce tasks and to provide also a constant ratio when they are scheduled on different processors.  This is the most general setting of the FFS problem (with a special third stage) for which a constant approximation ratio is known.
\end{abstract}

\section{Introduction}

The widespread use of MapReduce ~\cite{DeanG04} to implement massive parallelism for data intensive computing motivates the study of  new challenging shop scheduling problems. Indeed, a MapReduce job consists of a set of Map tasks and a set of Reduce tasks that can be executed simultaneously, provided that no Reduce task of a job can start execution before all the Map tasks of this job are completed. Moreover, a significant part of the processing cost in MapReduce applications is the communication cost due to the transmission of intermediate data from Map tasks to Reduce tasks (referred as data shuffle, see e.g., \cite{AU09,ASSU13,Ullman12}).
To exploit the inherent parallelism, the scheduler of such a system, which operates in centralized manner, has to efficiently assign and schedule Map and Reduce tasks to the available processors. In this context, standard shop scheduling problems are revisited to capture key constraints and singularities of MapReduce systems. In fact, a few results have been recently proposed based on simplified abstractions and resulting in known variants of the classical Open Shop and Flow Shop scheduling problems  ~\cite{ChangKKLLM11,ChenKL12,MoseleyDKS11}.

In this paper, we significantly generalize the Flexible Flow Shop (FFS) model for MapReduce scheduling proposed in~\cite{MoseleyDKS11}. Recall that in the FFS problem, we are given a set of jobs, each consisting of a number of tasks (each task corresponds to a stage), to be scheduled on a set of parallel processors dedicated to each stage. The jobs should be executed in the same fixed order of stages, without overlaps between tasks (stages) of the same job. Our generalization extends substantially the model proposed in~\cite{MoseleyDKS11} by taking into account all the important constraints of MapReduce systems:
(a) each job has multiple tasks in each stage; (b) the assignment of tasks to processors is flexible; (c) there are dependencies between Map and Reduce tasks; (d) the processors are unrelated to capture data locality; and (e) there is a significant communication cost for the data shuffle. Our goal is to find a non-preemptive schedule  minimizing the standard objective of total weighted completion time for a set of MapReduce jobs.

\noindent{\bf Motivation.}
MapReduce has been established as the standard programming model to implement massive parallelism in large data centers~\cite{DeanG04}. Applications of MapReduce such as search indexing, web analytics, and data mining involve the concurrent execution of several MapReduce jobs on a system like Google's MapReduce~\cite{DeanG04} or Apache Hadoop ~\cite{Hadoop}.
When a MapReduce job is executed, a number of Map and Reduce tasks are created. Each Map task operates on a portion of the input elements, translating them into a number of key-value pairs and, after an intermediate process, all pairs having the same key are available to a Reduce task, which operates on the values associated with that key and generates the final result. The basic idea behind MapReduce computation is that each job is split into a large number of Map and Reduce tasks that can be executed in parallel (see e.g., \cite{AU09,KSV10,ASSU13}).
%The execution of any Reduce task of a job cannot begin before all of its Map tasks are finished, while both Map and Reduce tasks can be executed in parallel.
In addition, a significant cost when running a MapReduce job is that of data shuffle, i.e., the transmission of intermediate data of a job from Map tasks to  Reduce tasks. This cost affects crucially the performance of MapReduce systems (e.g., bandwidth bottleneck~\cite{ChenKL12}, high wall-clock time~\cite{Ullman12}) and usually dominates the computation cost of Map and  Reduce tasks (see e.g.~\cite{AU09,ASSU13}).  In terms of scheduling,
this makes the problem more intricate and important for system performance.

%The above setting motivates the following generalization of the FFS problem ~cite{MoseleyDKS11}, where we are given $n$ jobs, each job is associated with two sets of multiple tasks (corresponding to the Map tasks/stage and Reduce tasks/stage)  and a positive weight, representing its importance. At each stage, all tasks can run simultaneously on a set of \textit{unrelated} parallel processors and they are available at time zero.  The goal is to find a \emph{non-preemptive} schedule of jobs' tasks to processors so as to minimize their total weighted completion time, with respect to the following precedence constraints: for each job, every Reduce task cannot start executing before all of the Map tasks are completed.

\noindent{\bf Related Work.} Known results for the FFS problem concern the two-stage case
on parallel identical processors. For the makepsan objective a PTAS is known ~\cite{SchuurmanW00}, while for the total weighted completion time objective, a simple 2-approximation algorithm was proposed in~\cite{GonzalezS78}, for the special case where   each stage has to be executed on a single processor.  For the latter case, recently in~\cite{MoseleyDKS11}
%FFS problem is known to be strongly $\mathcal{NP}$-hard~\cite{GareyJS76b} and
the authors proposed a QPTAS which becomes a PTAS for a fixed number of processing times of tasks.

In the MapReduce context, most of the previous work concerns the experimental evaluation of scheduling heuristics, from the viewpoint of finding good trade-offs between different practical criteria (see e.g., \cite{YooS11} and the references therein). From a theoretical point of view all known results ~\cite{ChangKKLLM11,ChenKL12,MoseleyDKS11} concern the minimization of  total weighted completion time. Chang et al. ~\cite{ChangKKLLM11}
 proposed approximation results using a very simple model  equivalent to the so called \emph{concurrent open shop} problem ~\cite{MastrolilliQSSU10}, where the dependencies between Map and Reduce tasks are missing and the assignment of tasks to processors is given.
Chen et al.~\cite{ChenKL12} proposed a more general model taking into account task dependencies but also  assuming  that tasks are  preassigned to  processors.  For this
 restricted model, they  presented  an LP-based $8$-approximation algorithm. Moreover, they managed to incorporate the data shuffle in their model and to derive a   $58$-approximation algorithm. Finally,
Moseley et al.~\cite{MoseleyDKS11} introduced the relation with the FFS problem and studied the cases of both identical and unrelated processors.   For identical processors, they presented  a $12$-approximation algorithm, and a $O(1/\epsilon^2)$-competitive online algorithm, for any $\epsilon \in (0, 1)$, under $(1+\epsilon)$-speed augmentation.
For unrelated processors  they studied  the very restricted  case where each job has a single Map and a single Reduce task, and presented a $6$-approximation algorithm and a $O(1/\epsilon^5)$-competitive online algorithm, for any $\epsilon \in (0, 1)$, under $(1+\epsilon)$-speed augmentation.

\noindent{\bf Our Results.} 
We present constant approximation algorithms  which substantially generalize the results  of \cite{MoseleyDKS11} for  MapReduce scheduling on unrelated processors towards two directions motivated by real MapReduce systems. In fact, we deal with jobs consisting of multiple Map and Reduce tasks and also incorporate the shuffle phase into our setting. As it has been observed in \cite{MoseleyDKS11}, new ideas and techniques are required for both these directions.

In Section 2, we present a $54$-approximation algorithm for the \mr problem when jobs consist of multiple Map and Reduce tasks. We first formulate an interval-indexed LP-relaxation for the problem of minimizing the total weighted completion times separately for  Map and  Reduce tasks on unrelated processors. Our LP formulation is inspired by the one proposed by Hall et al.~\cite{HallSSW97} for scheduling a set of single task jobs on unrelated processors under the same objective. However, in our problem, not all the tasks of each job contribute to the objective value, but only the one that finishes last and this makes the analysis of such an LP more difficult.
Recently, Correa et al.~\cite{CorreaSV12} proposed and analyzed a similar LP-relaxation for a more general problem, where, instead of jobs consisting of tasks, they are given a set of orders of jobs and the completion time of each order is specified by the completion of the job that finishes last.
Since scheduling multitask Map and Reduce jobs separately is quite similar to the setting considered in \cite{CorreaSV12}, we can use their approximation result for scheduling separately the Map and Reduce tasks. Next, we concatenate the two schedules into a single one respecting the task dependencies, by extending the ideas in~\cite{MoseleyDKS11} for single task jobs.

%As this later problem reduces to ours we use their approximation result for scheduling separately the Map and Reduce tasks. Next, we concatenate the two schedules into a single one respecting the task dependencies, by extending the ideas in~\cite{MoseleyDKS11} for single task jobs.

%Due to this fact, it becomes difficult to proceed to a straightforward application of the technique given in~\cite{HallSSW97} or other techniques that have been proposed for the same problem (see e.g.~\cite{Skutella01}).
%
%To overcome this difficulty we transform a solution to our LP-relaxation into a feasible schedule in two steps: first, we round an optimal LP solution based on a filtering technique by Lin and Vetter~\cite{LinV92}, in order to bound the execution interval of each task and then, we schedule the tasks within its corresponding interval of execution (non-preemptively on some processor) by applying the algorithm proposed by Lenstra et al.~\cite{LenstraST90} for makespan minimization.
%
%The idea of combining a time indexed LP-relaxation for a problem under the total weighted completion times objective with a well-known approximation algorithm for the corresponding makespan minimization problem, have been also applied for other scheduling problems (see e.g.~\cite{HallSSW97,ChudakS99,KumarMPS09}).

In Section~3, we incorporate the data shuffle phase into our model by  introducing  an additional set of \emph{Shuffle tasks} each one associated with a   communication cost (time).  When the Shuffle tasks are scheduled on the same processors as the corresponding Reduce tasks we are able to keep the same $54$-approximation ratio for the Map-Shuffle-Reduce scheduling problem.  Moreover, we also prove an $81$-approximation ratio when the  Shuffle tasks are allowed to  be executed on different processors than their corresponding Reduce tasks. To the best of our knowledge, this is the most general setting of the FFS problem (with a special third stage) for which a constant  approximation guarantee is known.

\noindent{\bf Problem statement and notation.}
In the sequel we consider a set $\mathcal{J}=\{1,2,\ldots,n\}$ of $n$ MapReduce jobs to be executed
on a set $\mathcal{P}=\{1,2,\ldots,m\}$ of $m$ unrelated processors.
Each job is available at time zero, is associated with a positive weight $w_j$
and consists of a set of Map tasks and a set of Reduce tasks. Let $\mathcal{M}$  and $\mathcal{R}$ be the set of all Map and all Reduce tasks respectively.
Each task is denoted by $\mathcal{T}_{k,j}\in \mathcal{M}\cup\mathcal{R}$, where $k \in N$ is the task index of job $j\in \mathcal{J}$
%and $b\in \{\mathcal{M},\mathcal{R}\}$
and is associated with a vector of non-negative processing times $\{p_{i,k,j}\}$, one for each processor $i\in \mathcal{P}_b$, where $b\in\{\mathcal{M},\mathcal{R}\}$. Let $\mathcal{P}_{\mathcal{M}}$ and $\mathcal{P}_{\mathcal{R}}$ be the set
of Map and the set of Reduce processors  respectively.
For convenience, we assume that $\mathcal{P}_{\mathcal{M}}\cap\mathcal{P}_{\mathcal{R}} = \emptyset$, however we are able to extend our results to the case where the two sets of processors are not necessarily disjoint (or even are identical).
Each job has at least one Map and one Reduce task and
every Reduce task can start its execution after the completion of all Map tasks of the same job.
%For the sake of simplicity, we assume that $k\leq m$; in general it can be a polynomial function of both $m,n$.

For a given schedule we denote by $C_j$ and $C_{k,j}$ the completion times of each
job $j \in \mathcal{J}$ and each  task $\mathcal{T}_{k,j}\in \mathcal{M}\cup\mathcal{R}$ respectively.
Note that, due to the precedence constraints between Map and Reduce tasks, $C_j = \max_{\mathcal{T}_{k,j}\in\mathcal{R}} \{C_{k,j}\}$.
By $C_{max}=\max_{j \in \mathcal{J}} \{C_j\}$ we denote the makespan of the schedule, i.e., the completion time of the job which finishes last.
Our goal is to schedule \emph{non-preemptively} all Map tasks on processors of $\mathcal{P}_{\mathcal{M}}$ and all Reduce tasks on processors of $\mathcal{P}_{\mathcal{R}}$, with respect to their precedence constraints,
so as to minimize the total weighted completion time of the schedule, i.e., $\sum_{j\in\mathcal{J}}w_j C_j$. We refer to this problem as \mr problem.

Concerning the complexity of \mr problem,
it generalizes the FFS problem which is is known to be strongly $\mathcal{NP}$-hard~\cite{GareyJS76b}, even when there is a single Map and a single Reduce task that has to be assigned only to one Map and one Reduce processor respectively.

\section{\mr problem }

In this section, we present a $54$-approximation algorithm for the \mr problem. Our algorithm is
executed in the following two steps: (i) it computes a $27/2$-approximate schedule for assigning and scheduling all Map tasks (resp. Reduce tasks) on processors of the set $\mathcal{P}_{\mathcal{M}}$ (resp. $\mathcal{P}_{\mathcal{R}}$) and (ii) it merges the two schedules in one, with respect to the precedence constraints between Map and Reduce tasks of each job. Step (ii) is performed by increasing the approximation ratio by a factor of $4$.

\subsection{Scheduling Map tasks and Reduce tasks}

To schedule separately the Map and Reduce tasks on the processors   $\mathcal{P}_{\mathcal{M}}$ and  $\mathcal{P}_{\mathcal{R}}$, respectively,
we start by formulating  an interval-indexed LP-relaxation for the minimization of the total weighted completion time. Our LP-relaxation is an adaptation to our problem of the standard LP-relaxation proposed by Hall et al.~\cite{HallSSW97} for the problem of minimizing the total weighted completion time on unrelated processors.

For notational convenience, we use an argument $b\in\{\mathcal{M}, \mathcal{R}\}$ to refer either to Map or to Reduce sets of tasks.
We define $(0,t_{\max} = \sum_{\mathcal{T}_{k,j}\in b}\max_{i\in \mathcal{P}_b}p_{i,k,j}]$ to be the time horizon of potential completion times, where $t_{\max}$ is an upper bound on the makespan of a feasible schedule. We discretize the time horizon into intervals $[1,1],(1,(1+\delta)], ((1+\epsilon),(1+\delta)^2],\ldots,((1+\delta)^{L-1},(1+\delta)^L]$, where $\delta\in (0,1)$ is a small constant, and $L$ is the smallest integer such that $(1+\delta)^{L-1} \geq t_{\max}$.
Let $I_{\ell}=((1+\delta)^{{\ell}-1},(1+\delta)^{\ell}]$, for $1 \leq \ell\leq L$, and $\mathcal{L} = \{1,2,\ldots, L\}$. Note that, interval $[1,1]$ implies that no job finishes its execution before time 1; in fact, we can assume, without loss of generality, that all processing times are positive integers. Note also that, the number of intervals is polynomial in the size of the instance and in $1/\delta$.
%, as $L=\lceil\log_{1+\delta}{\sum_{\mathcal{T}_{k,j}\in b}\max_{i\in \mathcal{P}_b}p_{i,k,j}}\rceil+1$.
For each processor $i\in \mathcal{P}_b$, task $\mathcal{T}_{k,j}\in b$ and $\ell \in \mathcal{L}$, we introduce a variable $y_{i,k,j,\ell}$ that indicates if task $\mathcal{T}_{k,j}$ is completed on processor $i$ within the time interval $I_{\ell}$.
Furthermore, for each task $\mathcal{T}_{k,j} \in \mathcal{T}$, we introduce a variable $C_{k,j}$ corresponding to its completion time.
%, and a variable $z_{k,j}$ corresponding to its fractional processing time.
For every job $j\in \mathcal{J}$, we also introduce a dummy task $D_j$ with zero processing time on every processor, which has to be processed after the completion of every other task $\mathcal{T}_{k,j}\in b$. Note that, the corresponding integer program is a $(1+\delta)$-relaxation of the original problem.
%
%\begin{figure}[h!]
\begin{alignat}{2}
 & LP(b): \text{minimize} \sum_{j \in \mathcal{J}} w_j C_{D_j} \notag\\
 & \text{subject to}: \notag\\
 & \sum_{i\in\mathcal{P}_b, \ell\in \mathcal{L}} y_{i, k, j, \ell} \geq 1,~~~~~~~~~~~~~~~~~~~~~~~~~~~~~~~~\forall \mathcal{T}_{k,j}\in b\label{lp:p1}\\
% & z_{k, j} = \sum_{i\in \mathcal{P}_b} p_{i, k, j} \sum_{\ell\in \mathcal{L}} y_{i, k, j, \ell},~~~~~~~~~~~~~~~~~~~~~~~\forall \mathcal{T}_{k,j}\in b\label{lp:p2}\\
 & C_{D_j} \ge C_{k, j},~~~~~~~~~~~~~~~~~~~~~~~~~~~~~~~\forall j\in\mathcal{J}, \mathcal{T}_{k,j}\in b\label{lp:p3}\\
 & \sum_{i\in\mathcal{P}_b}\sum_{\ell\in \mathcal{L}} (1 + \delta)^{\ell-1} y_{i, k, j, \ell} \le C_{k, j}~~~~~~~~~~~~~~~\forall \mathcal{T}_{k,j}\in b\label{lp:p4}\\% \le \sum_{i\in\mathcal{P}_b}\sum_{\ell\in \mathcal{L}} (1 + \delta)^{\ell} y_{i, k, j, \ell},
 & \sum_{\mathcal{T}_{k, j}\in b} p_{i, k, j} \sum_{t \le \ell} y_{i, k, j, t} \le (1 + \delta)^\ell,~~~~~~~~~\forall i \in \mathcal{P}_b, \ell\in \mathcal{L}\label{lp:p5}\\
 %& \sum_{i\in \mathcal{P}_b} \sum_{t \le \ell} y_{i, k, j, t} \le (1 + \delta)^{\ell},~~~~~~~~~~~~~~~~\forall \mathcal{T}_{k,j}\in b,\ell\in \mathcal{L}\label{lp:p6}\\
 & p_{i, k, j} > (1 + \delta)^{\ell} \Rightarrow y_{i, k, j, \ell} = 0,~\forall i \in \mathcal{P}_b, \mathcal{T}_{k,j}\in b, \ell\in \mathcal{L}\label{lp:p7}\\
 & y_{i, k, j, \ell} \ge 0,~~~~~~~~~~~~~~~~~~~~~~~~~~\forall i \in \mathcal{P}_b, \mathcal{T}_{k,j}\in b, \ell\in \mathcal{L}\label{lp:p8}
% \label{lp:filtering}
\end{alignat}
%\end{figure}

Our objective is to minimize the sum of weighted completion times of all jobs. Constraints~(\ref{lp:p1}) ensure that each task is completed on a processor of the set $\mathcal{P}_b$ in some time interval. Constraints~(\ref{lp:p3}) assure that for each job $j\in \mathcal{J}$, the completion of each task $\mathcal{T}_{k,j}$ precedes the completion of task $D_j$.
Constraints~(\ref{lp:p4}) impose a lower bound on the completion time
of each task. For each $\ell\in \mathcal{L}$, constraints (\ref{lp:p5}) and (\ref{lp:p7}) are validity constraints which state that the total processing time of jobs that are executed up to an interval $I_{\ell}$ on a processor $i\in\mathcal{P}_b$ is at most $(1+\delta)^{\ell}$, and that if it takes time more than $(1+\delta)^{\ell}$ to process a task $\mathcal{T}_{j,k}$ on a processor $i\in \mathcal{P}_b$, then $\mathcal{T}_{k,j}$ should not be scheduled on $i$, respectively.

Our algorithm, called \algotask, starts from an optimal fractional solution $(\bar{y}_{i,k,j,\ell},\bar{C}_{k,j},\bar{C}_{D_j})$ to $LP(b)$ and, working along the lines of \cite[Section~5]{CorreaSV12}, rounds it to an integral solution corresponding to a feasible $27/2$-approximate schedule of the job set $\mathcal{J}$ on processors $\mathcal{P}_b$.
The idea of \algotask is to partition the set of tasks $\mathcal{T}_{k,j}$ into classes $S(\ell) = \{\mathcal{T}_{k,j}\in b~|~(1+\delta)^{\ell-1} \leq a \bar{C}_{k,j} \leq (1+\delta)^{\ell} \}$, where $\ell \in \{1, \ldots, L\}$ and $a > 1$ is a parameter, according to their (fractional) completion time in the optimal solution of $LP(b)$, and to use \cite[Theorem~2.1]{ST93} for scheduling the tasks in each class $S(\ell)$ independently. In fact, \algotask can be regarded as a generalization of the approximation algorithm in \cite[Section~4]{HallSSW97}, where the objective is to minimize weighted completion time, but each job consists of a single task (see also the discussion in \cite[Section~5]{CorreaSV12}).

More specifically, we first observe that by the definition of $S(\ell)$ and due to constraints (\ref{lp:p1}) and (\ref{lp:p4}), for each task $\mathcal{T}_{k,j} \in S(\ell)$, $\sum_{i\in\mathcal{P}_b}\sum_{t \leq \ell} y_{i, k, j, t} \geq \frac{a-1}{a}$. Otherwise, it would be $\sum_{i\in\mathcal{P}_b}\sum_{t \geq \ell+1} y_{i, k, j, t} > \frac{1}{a}$, which implies $a \bar{C}_{k,j} > (1+\delta)^\ell$. 
Therefore, if we set $y^\ast_{i, j, k, t} = 0$, for all $t \geq \ell+1$, and $y^\ast_{i, j, k, t} = \frac{a}{a-1}\bar{y}_{i,j,k,t}$, for all $t \leq \ell$, we obtain a solution $y^\ast_{i,j,k,t}$ that satisfies the constraints (\ref{lp:p1}), (\ref{lp:p5}), and (\ref{lp:p7}) of $LP(b)$, if the right-hand side of (\ref{lp:p5}) is multiplied by $a/(a-1)$. Therefore, for each $\ell = 1, \ldots, L$, the tasks in $S(\ell)$ alone can be (fractionally) scheduled on processors $\mathcal{P}_b$ with makespan at most $\frac{a}{a-1}(1+\delta)^\ell$. Now, using \cite[Theorem~2.1]{ST93}, we obtain an integral schedule for the tasks in $S(\ell)$ alone with makespan at most $(\frac{a}{a-1}+1)(1+\delta)^\ell$. By the definition of $S(\ell)$, in this integral schedule, each task $\mathcal{T}_{k,j} \in S(\ell)$ has a completion time of at most $a(\frac{a}{a-1}+1)(1+\delta) \bar{C}_{k,j}$. Therefore, if we take the union of these schedules, one after another, in increasing order of $\ell = 1, \ldots, L$, the completion time of each job $j$ is at most $a(\frac{a}{a-1}+1+\frac{1}{\delta})(1+\delta) \bar{C}_{D_j}$. Choosing $a = 3/2$ and $\delta = 1/2$, we obtain that:

\begin{theorem} \cite{CorreaSV12} \label{thm:maporreduce}
 \algotask is a $27/2$-approximation for scheduling a set of Map tasks (resp. Reduce tasks) on a set of unrelated processors $\mathcal{P}_\mathcal{M}$ (resp. $\mathcal{P}_\mathcal{R}$), in order to minimize their total weighted completion time.%, for $\varepsilon\in(0,1)$.
\end{theorem}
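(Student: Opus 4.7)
My plan is to follow the interval-indexed LP rounding paradigm of Hall, Schulz, Shmoys, and Wein~\cite{HallSSW97}, adapted to multitask orders as in Correa, Skutella, and Verschae~\cite{CorreaSV12}. Starting from an optimal fractional solution $(\bar y_{i,k,j,\ell},\bar C_{k,j},\bar C_{D_j})$ to $LP(b)$, whose value lower-bounds the integral optimum up to the discretization factor $(1+\delta)$, I would partition the tasks into geometrically growing classes indexed by their LP completion times and then assemble the per-class integral schedules into a single global schedule.

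Concretely, fix parameters $a>1$ and $\delta\in(0,1)$ and define, for each $\ell\in\mathcal{L}$, the class $S(\ell)=\{\mathcal{T}_{k,j}\in b : (1+\delta)^{\ell-1}\le a\bar C_{k,j}\le(1+\delta)^\ell\}$. The first step is to observe that, for every task in $S(\ell)$, constraint~(\ref{lp:p4}) forces $\sum_{i\in\mathcal{P}_b}\sum_{t\le\ell}\bar y_{i,k,j,t}\ge(a-1)/a$; otherwise more than a $1/a$ fraction of the assignment would sit on intervals $t\ge\ell+1$, pushing $a\bar C_{k,j}$ above $(1+\delta)^\ell$. Next, set $y^\ast_{i,k,j,t}=\tfrac{a}{a-1}\bar y_{i,k,j,t}$ for $t\le\ell$ and $0$ for $t>\ell$. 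The vector $y^\ast$ satisfies constraints~(\ref{lp:p1}) and~(\ref{lp:p7}), and satisfies~(\ref{lp:p5}) with right-hand side scaled by $a/(a-1)$. Interpreted as a fractional assignment of $S(\ell)$ to $\mathcal{P}_b$, its makespan is at most $\tfrac{a}{a-1}(1+\delta)^\ell$, and the Shmoys--Tardos rounding theorem~\cite[Theorem~2.1]{ST93} converts it into an integral schedule of $S(\ell)$ with makespan at most $\bigl(\tfrac{a}{a-1}+1\bigr)(1+\delta)^\ell$. Since $(1+\delta)^{\ell-1}\le a\bar C_{k,j}$ for every $\mathcal{T}_{k,j}\in S(\ell)$, each such task completes in its own class schedule by time at most $a\bigl(\tfrac{a}{a-1}+1\bigr)(1+\delta)\bar C_{k,j}$.

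The final step is to concatenate the per-class schedules in increasing order of $\ell$. Because the class makespans form a geometric progression with ratio $1+\delta$, the cumulative start-time of class $\ell$ contributes an additive overhead of at most $(1+\delta)^\ell/\delta$ to each task in it, which is itself bounded by $a(1+\delta)\bar C_{k,j}/\delta$. Combining this with the within-class completion bound yields $C_j\le a\bigl(\tfrac{a}{a-1}+1+\tfrac{1}{\delta}\bigr)(1+\delta)\bar C_{D_j}$ for every job $j$. Choosing $a=3/2$ and $\delta=1/2$ gives $\tfrac{a}{a-1}=3$ and $\tfrac{1}{\delta}=2$, so the prefactor equals $\tfrac{3}{2}\cdot 6\cdot\tfrac{3}{2}=\tfrac{27}{2}$. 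Weighting by $w_j$ and summing over jobs, and using that $\sum_j w_j\bar C_{D_j}$ is a lower bound on the optimum, delivers the claimed $27/2$-approximation.

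The main technical obstacle is the concatenation bookkeeping: one must argue carefully that the geometric growth of class makespans dominates the cumulative start-time delay, so that the overhead is charged only once, and that the class bound on $\bar C_{k,j}$ translates cleanly into a bound on the integral completion time through the factor $a(1+\delta)$. A secondary subtlety is verifying the feasibility of $y^\ast$ under the scaled form of~(\ref{lp:p5}); this is precisely where the choice $a>1$ must be balanced against the induced makespan blow-up. Once these pieces are aligned, the remainder reduces to direct LP accounting and optimization of the two constants $a$ and $\delta$.
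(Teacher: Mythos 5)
Your proposal follows exactly the argument the paper gives for this theorem: the same class decomposition $S(\ell)$ based on $a\bar C_{k,j}$, the same $(a-1)/a$ mass argument and $\tfrac{a}{a-1}$ rescaling of $\bar y$, the same application of the Shmoys--Tardos rounding theorem per class, the same concatenation in increasing $\ell$ with the $1/\delta$ geometric-series overhead, and the same parameter choice $a=3/2$, $\delta=1/2$ yielding $27/2$. Nothing to add; the approach and the bookkeeping (including the slightly informal charging of the concatenation overhead) coincide with the paper's.
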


%%%%%%%%%%%%%%%%%%%%%%%%%%%%%%%%%%%%%%%%%%%%%%%%%%%%%%%%%%%%%%%%%%%%%%%%%%%%%%%%%%%%%%%%%%%%%%%%%%%%%%%%%%

\subsection{Merging task schedules}
Let $\sigma_{\mathcal{M}}, \sigma_{\mathcal{R}}$ be two schedules computed by two runs of \algotask, for $b=\mathcal{M}$ and $b=\mathcal{R}$, respectively. Let also $C_{j}^{\sigma_{\mathcal{M}}}=$ $\max_{\mathcal{T}_{j,k}\in \mathcal{M}}\{C_{k,j}\},C_{j}^{\sigma_{\mathcal{R}}} =$ $\max_{\mathcal{T}_{j,k}\in \mathcal{R}}\{C_{k,j}\}$ be the completion times of all the Map and all the Reduce tasks of a job $j\in \mathcal{J}$ within these schedules, respectively. Depending on these completion time values, we assign each job $j\in \mathcal{J}$ a \emph{width} equal to $\omega_j = \max\{C_{j}^{\sigma_{\mathcal{M}}}, C_{j}^{\sigma_{\mathcal{R}}}\}$. %The following algorithm computes a feasible schedule. %for the \mr problem.

\begin{algorithm}[h!]
\mrs
\begin{algorithmic}[1]
%\State Run \algotask on both the set Map tasks $\mathcal{M}$ and Reduce tasks $\mathcal{R}$.
\State Assign the tasks in $\mathcal{M}\cup\mathcal{R}$ on the same processors as in schedules $\sigma_{\mathcal{M}}$ and $\sigma_{\mathcal{R}}$ respectively.
\For {each job $j\in \mathcal{J}$}
\State Fix $\omega_j = \max\{C_{j}^{\sigma_{\mathcal{M}}}, C_{j}^{\sigma_{\mathcal{R}}}\}$ to be the width job $j$
\EndFor
\For {each time $t$ where a processor $i\in \mathcal{P}$ becomes available}
\If {$i=\mathcal{P}_\mathcal{M}$}
\State Among the unscheduled Map tasks in $i$, schedule task $\mathcal{T}_{k,j}\in \mathcal{M}$ with the smallest $\omega_j$, with processing time $p_{i,k,j}$.
%Let $C_{k,j}$ be the completion time of task $\mathcal{T}_{k,j}$.
\Else
\State Among the unscheduled Reduce tasks, which have $\omega_j > t$, schedule  task $\mathcal{T}_{k,j}\in \mathcal{R}$ with the smallest $\omega_j$, with processing time $p_{i,k,j}$.
%Let $C_{k,j}$ be the completion time of task $\mathcal{T}_{k,j}$.
\EndIf
\State Let $C_{k,j}$ be the completion time of task $\mathcal{T}_{k,j}$.
\EndFor
\For {each job $j\in \mathcal{J}$}
\State Compute the completion time $C_j = \max_{\mathcal{T}_{k,j}\in{\mathcal{R}}}C_{k,j}$.
\EndFor
\end{algorithmic}
\end{algorithm}

\mrs computes a feasible schedule by processing, in each time instant where a processor $i\in\mathcal{P}_b$ becomes available, either the Map task, assigned to $i\in \mathcal{P}_{\mathcal{M}}$ in $\sigma_\mathcal{M}$, with the minimum width, or the available (w.r.t. its release time $\omega_j$) Reduce task, assigned to $i\in\mathcal{P}_{\mathcal{R}}$ in $\sigma_\mathcal{R}$, with the minimum width.

Extending the analysis in~\cite{MoseleyDKS11}, we are able to prove that:

\begin{theorem} \label{thm:mrs}
\mrs is a $54$-approximation for the \mr problem.
\end{theorem}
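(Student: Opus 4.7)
The plan is to prove that each job's completion time in the merged schedule $\sigma$ produced by \mrs satisfies $C_j^{\sigma} \le 2\omega_j$, and then charge $\omega_j$ against the separate Map and Reduce schedules, each of which is a $27/2$-approximation by Theorem~\ref{thm:maporreduce}. Combined with the observation that both the Map-only and the Reduce-only optimal values are lower bounds on the optimum of the full \mr problem, this yields a ratio of $2\cdot(27/2)\cdot 2 = 54$.

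The first step is a per-processor rearrangement argument for Map tasks. Fix a Map processor $i\in\mathcal{P}_\mathcal{M}$, fix a job $j$, and consider the set $S_i^{\mathcal{M}}(j)$ of Map tasks assigned to $i$ in $\sigma_\mathcal{M}$ whose owners have width at most $\omega_j$. Since every task in $S_i^{\mathcal{M}}(j)$ completes in $\sigma_\mathcal{M}$ by its own completion time on $i$, which is at most $\omega_j$, the total processing load of $S_i^{\mathcal{M}}(j)$ on $i$ is at most $\omega_j$. In \mrs, the tasks on $i$ are only reordered (never reassigned) and Map tasks are processed in increasing order of width without idling, so every Map task of job $j$ on $i$ completes by the moment all of $S_i^{\mathcal{M}}(j)$ is done, that is, by $\omega_j$. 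Taking the maximum over all processors carrying Map tasks of $j$, we get $C_{k,j}^{\sigma}\le \omega_j$ for every Map task of $j$.

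The second step is the analogous bound on Reduce tasks, where the width also plays the role of a release time. Fix a Reduce processor $i\in\mathcal{P}_\mathcal{R}$ and a Reduce task $\mathcal{T}_{k,j}$ assigned to $i$ in $\sigma_\mathcal{R}$. By the same argument as above, the total processing time on $i$ of Reduce tasks $\mathcal{T}_{k',j'}$ with $\omega_{j'}\le \omega_j$ is at most $\omega_j$. In \mrs, none of these tasks is eligible before its own width time; however, from time $\omega_j$ onwards all of them are eligible and \mrs schedules them by increasing width without idling while any is available. So, even in the worst case where $\mathcal{T}_{k,j}$ is processed last among this set, it completes by time $\omega_j + \omega_j = 2\omega_j$. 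Taking the maximum over all Reduce tasks of $j$ yields $C_j^{\sigma} \le 2\omega_j$.

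The final step is to pay for $\omega_j$. By definition $\omega_j = \max\{C_j^{\sigma_\mathcal{M}},C_j^{\sigma_\mathcal{R}}\} \le C_j^{\sigma_\mathcal{M}} + C_j^{\sigma_\mathcal{R}}$, so
\begin{equation*}
\sum_{j\in\mathcal{J}} w_j C_j^{\sigma} \;\le\; 2\sum_{j\in\mathcal{J}} w_j\,\omega_j \;\le\; 2\sum_{j\in\mathcal{J}} w_j C_j^{\sigma_\mathcal{M}} + 2\sum_{j\in\mathcal{J}} w_j C_j^{\sigma_\mathcal{R}}.
\end{equation*}
Since any optimal schedule for the full \mr problem induces a feasible schedule for the Map tasks alone and for the Reduce tasks alone (with completion times no larger than the job completion times in that schedule), both $\mathrm{OPT}_\mathcal{M}$ and $\mathrm{OPT}_\mathcal{R}$ are at most $\mathrm{OPT}$. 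Applying Theorem~\ref{thm:maporreduce} to each of the two sums gives $\sum_j w_j C_j^{\sigma} \le 2\cdot\frac{27}{2}\,\mathrm{OPT}_\mathcal{M} + 2\cdot\frac{27}{2}\,\mathrm{OPT}_\mathcal{R} \le 54\,\mathrm{OPT}$. The main technical point is the $2\omega_j$ bound on Reduce completion times: one must verify that treating $\omega_j$ as a release time together with the smallest-width rule does not waste more than one additional $\omega_j$ of processor time, which relies crucially on the load-balance inequality $\sum_{(k',j')\in S_i^{\mathcal{R}}(j)} p_{i,k',j'} \le \omega_j$ inherited from $\sigma_\mathcal{R}$.
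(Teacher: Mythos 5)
Your proposal is correct and follows essentially the same route as the paper: bound each job's merged completion time by $2\omega_j$ via a per-processor volume argument inherited from $\sigma_\mathcal{M}$ and $\sigma_\mathcal{R}$, then charge $\omega_j \le C_j^{\sigma_\mathcal{M}}+C_j^{\sigma_\mathcal{R}}$ against the two $27/2$-approximate schedules and use that each separate optimum lower-bounds $\mathrm{OPT}$. Your explicit load-balance inequality on each processor is just a cleaner rendering of the paper's no-idle-time contradiction argument, and your Map-task bound $C_{k,j}^{\sigma}\le\omega_j$ also supplies the feasibility (precedence) check that the paper states separately.
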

\begin{proof}
  First, we have to prove that the schedule computed by the \mrs algorithm is a non-preemptive one. This is obvious for the Map tasks, while in case of Reduce tasks the only way to have preemption is to have a task $\mathcal{T}_{r_1,j}$ that is not scheduled by the time a task $\mathcal{T}_{r_2,j}$ with higher width is executed. But this cannot happen because if $\mathcal{T}_{r_2,j}$ has higher width, then it will be available after $\mathcal{T}_{r_1,j}$ and our algorithm will schedule first $\mathcal{T}_{r_1,j}$ thus, a contradiction. Therefore, by execution of \mrs it is clear that all tasks are executed non-preemptively, while all Map tasks are scheduled only on the Map processors $\mathcal{P}_\mathcal{M}$ and all Reduce tasks only on the Reduce processors $\mathcal{P}_\mathcal{R}$.

  Now, we have to prove that the resulting schedule respects the precedence between Map and Reduce tasks. Therefore, we have to prove that a Map task with width $\omega_j$ finishes before time $\omega_j$. This means that the corresponding Reduce tasks will be executed afterwards since their release time is $\omega_j$. For the sake of contradiction we assume that there is a map task $\mathcal{T}_{m_1,j}$ with width $\omega_j$ finishing by time $t > \omega_j$. It is obvious that the schedule has no idle time and therefore in the time interval $[0, t]$ the processor $i$ of task $\mathcal{T}_{m_1,j}$ processes tasks with width at most $\omega_j$. However, by definition of width this means that in schedule $\sigma_{\mathcal{M}}$ the processor $i$ processes more than $\omega_j$ volume of work in less than $\omega_j$ time which gives us a contradiction.

Using the same argument as in the Map case, we can prove that the completion time of each Reduce task is upper bounded from $r + \omega_j$, where $r$ is the release time of the task in $\sigma$. Moreover, as we note, $r \leq \omega_j$ and thus $C_j^{\sigma} \le 2\omega_j= 2\max\{C_{j}^{\sigma_{\mathcal{M}}},C_{j}^{\sigma_{\mathcal{R}}}\}$.
  Now, let $C^{OPT}_j$ be the completion time of job $j$ in the overall optimal schedule and let $C^{OPT_{\mathcal{M}}}_j, C^{OPT_{\mathcal{R}}}_j$ be its completion time in the optimal separate schedules of the Map and the Reduce tasks.
  Applying Theorem \ref{thm:maporreduce} and using the fact that $\sum_j C^{OPT}_j \ge \sum_j C^{OPT_{\mathcal{M}}}_j$ and $\sum_j C^{OPT}_j \ge \sum_j C^{OPT_{\mathcal{R}}}_j$,
%as we can select an $\epsilon$ such that $\epsilon\leq 2\varepsilon$ we have $C_j^{\sigma} \le 2 ( C_{j}^{\sigma_{\mathcal{M}}} + C_{j}^{\sigma_{\mathcal{R}}} ) \le 2 ( 16 + \varepsilon ) ( C_{j}^{OPT_{\mathcal{M}}} + C_{j}^{OPT_{\mathcal{R}}} ) \le ( 32 + \epsilon ) C_{j}^{OPT}$ and
  the theorem follows.\qed
\end{proof}

\noindent{\em Remark.} If the two sets of processors, $\mathcal{P}_\mathcal{M}, \mathcal{P}_\mathcal{R}$, are not necessarily disjoint (or even if they coincide with each other), then by setting $\omega_j= C_{j}^{\sigma_{\mathcal{M}}} + C_{j}^{\sigma_{\mathcal{R}}}$ and applying a similar analysis, we can yield the same result as in Theorem~\ref{thm:mrs}.

\section{\msr problem}

In the \mr problem of the previous section the Reduce phase of each job can start executed once its Map phase is  finished. However, in real systems there is a significant cost for the key-value pairs with the same key  to be transmitted  to the corresponding single reduce task. In this section, inspired by~\cite{ChenKL12},  we incorporate the data shuffle phase in our model. To this end, we introduce a number of \emph{Shuffle tasks} for each Map task that simulate this transmission of the key-value pairs from a Map to the corresponding Reduce tasks. In contrast to~\cite{ChenKL12}, where the assignment of Shuffle tasks to processors is fixed, we consider a flexible model and study two different variants. In the first variant, each Shuffle task is executed on the same processor with its corresponding Reduce task, while in the second one, we consider a different set of processors executing the Shuffle tasks. For both variants, we present $O(1)$-approximation algorithms.

Note that the number of different keys is in general greater than the number of the Reduce processors available, and in this case a Reduce task receives all key-value pairs of some different keys. Although not all Reduce tasks receive key-value pairs from each Map task, we may assume without loss of generality that this is the case by simply setting  the transmission time of the corresponding Shuffle tasks  equal to zero. We also assume that only a single key-value pair can be transferred to a Reduce processor at any time  and moreover, the transmission process cannot be interrupted. Thus, since the key-value pairs allocated to the same Reduce task cannot be transmitted in parallel, we can assume that all key-value pairs from a Map task that have been assigned to the same Reduce task can be considered as a single Shuffle task. Hence, the number of Shuffle tasks per Map task equals the number of the Reduce tasks.

%Finally, if a Map task and a Reduce task of the same job are assigned to the same processor,
%we assume that the computed key-value pairs are transmitted across the network that connects the processors (see e.g. \cite{Ullman12}).

The following properties summarize the above discussion for the \msr problem:\\[1ex]
%when a Map task of a job finishes its execution, the shuffle tasks must be transferred to every Reduce task of the job.
{\em
\noindent{\bf Properties}\\[1ex]%\begin{assumption}
(i) Each Shuffle task cannot start its execution before the completion of its corresponding Map task.\\
(ii) For every Map task of a job, there are as many Shuffle tasks as the job's Reduce tasks. Some of them may have zero processing time, indicating that no key-value pairs are transmitted from the corresponding Map task to the corresponding Reduce task).\\
(iii) Each Shuffle task is executed non-preemptively.\\
(iv) Shuffle tasks that are transmitting to the same Reduce processor must not overlap with each other.\\
%(v) Each Shuffle task runs on the same processor as its corresponding Reduce task.\\[1ex]
}
%\end{assumption}
%As in~\cite{ChenKL12} the transmission process can be formulated as a \emph{shuffle task} that has to be executed simultaneously
%execution of a shuffle task (key-value pair computed by a Map Task)
%on two virtual processors: a sender one, corresponding to the egress port of the Map processor and a receiver that corresponds to the ingress port of the %reduce processor.

To present our algorithms for the \msr problem we introduce some additional notation.
For each Map task $\mathcal{T}_{k,j}\in \mathcal{M}$ of a job $j\in \mathcal{J}$, we introduce a set of Shuffle tasks $\mathcal{T}_{r,k,j}$, $1 \leq r\leq \tau_j = |\{\mathcal{T}_{k,j}\in \mathcal{R}\}|$, where $\tau_j$ is the number of Reduce tasks of job $j$. We denote by $\mathcal{H}$ the set of Shuffle tasks; note that for each Map task of a job there is a bijection between its Shuffle tasks and the job's Reduce tasks. Each Shuffle task $\mathcal{T}_{r,k,j}\in \mathcal{H}$ is associated with a transfer time $t_{r,k,j}$, which is independent of the processor assignment. In Fig.\ref{fig:shuffle}(i) we depict a MapReduce job $j$, as formed after the introduction of the Shuffle tasks.

\begin{figure}[h!]
\begin{center}
\includegraphics[scale=0.56]{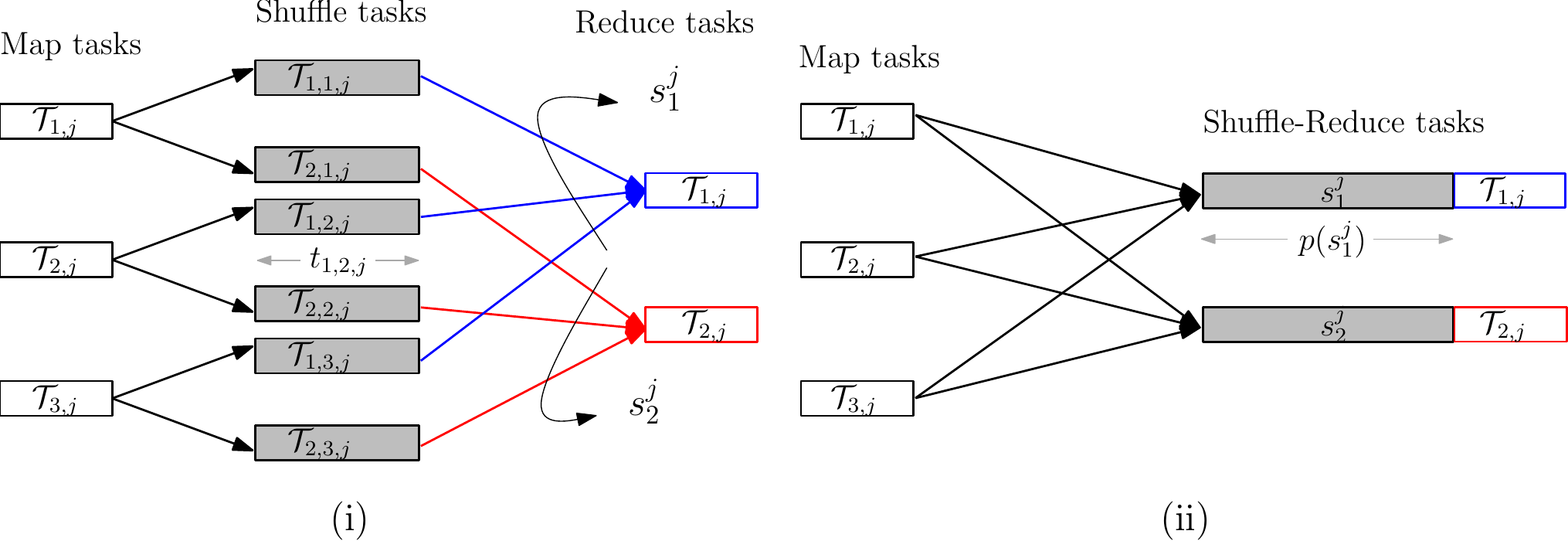}
\end{center}
\caption{(i) Shuffle tasks and their precedence constraints with the Map tasks and Reduce tasks of a job $j$ that comprises three Map tasks and two Reduce tasks and (ii) Precedence constraints among Map tasks and Shuffle-Reduce tasks.}
\label{fig:shuffle}
\end{figure}

\subsection{The Shuffle Tasks are Executed on their Reduce Processors}
When the Shuffle tasks are executed on the same processors with its corresponding Reduce tasks,
our algorithm proceeds into steps as for the \mr problem: a) It computes a $27/2$-approximate schedule for the Map Tasks and a $27/2$-approximate schedule for the
Shuffle-Reduce tasks, with respect to the task Properties (iii)-(iv) and b) it merges the two schedules in a $54$-approximate schedule for the Map-Shuffle-Reduce problem, with respect to the precedence between Map, Shuffle and Reduce tasks.

The key element of our algorithm is the integration of the Shuffle phase into the Reduce phase. In this direction, we consider a Reduce task $\mathcal{T}_{r,j}$ of a job $j$ and let $s_j^r = \{\mathcal{T}_{r,k,j}~|~\mathcal{T}_{k,j}\in \mathcal{M}\}$ be the set of Shuffle tasks that must complete before task $\mathcal{T}_{r,j}$ starts its execution. As the tasks in $s_j^r$ will be executed in the same processor as Reduce task $\mathcal{T}_{r,j}$. Then, we are able to prove the following

\begin{lemma}
There is an optimal schedule of Shuffle tasks and Reduce tasks on processors of the set $\mathcal{P}_{\mathcal{R}}$ such that:\\
(i) There are no idle periods and\\
(ii) All Shuffle tasks in $s_j^r$ are executed together and complete exactly before the Reduce task $\mathcal{T}_{r,j}$ starts its execution.\label{le:s-ropt}
\end{lemma}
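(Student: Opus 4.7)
The plan is to start from an arbitrary optimal schedule $\sigma^*$ of the Shuffle--Reduce subproblem on $\mathcal{P}_\mathcal{R}$ and, via two objective-preserving transformations, exhibit an optimal schedule satisfying both (i) and (ii). Property (i) is obtained by a standard left-shift: processing tasks in nondecreasing order of start time, slide each task as early as possible on its assigned processor, subject only to the Shuffle-to-Reduce precedence (the Map precedence is absent in this subproblem and will be re-imposed only in the merging step). Every such shift preserves feasibility and cannot increase any task's completion time.

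For (ii), I would fix the processor assignment and the relative order of Reduce tasks on each processor to be exactly those of $\sigma^*$. Pick a processor $i\in\mathcal{P}_\mathcal{R}$ and let $R_1,\ldots,R_m$ be its Reduce tasks in the order inherited from $\sigma^*$, writing $R_\ell=\mathcal{T}_{r_\ell,j_\ell}$. By the standing assumption of this subsection, every Shuffle task executed on $i$ lies in exactly one set $B_\ell := s_{j_\ell}^{r_\ell}$. I would then reorder the tasks on $i$ into the contiguous pattern
\[
B_1,\,R_1,\,B_2,\,R_2,\,\ldots,\,B_m,\,R_m,
\]
with arbitrary internal order inside each $B_\ell$ and no idle time; call the resulting schedule $\sigma'$. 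By construction, $\sigma'$ satisfies (i) and (ii), Properties (iii)--(iv) hold, and each block $B_\ell$ ends exactly when $R_\ell$ starts, so the Shuffle-to-Reduce precedence is respected.

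What remains is to show that no Reduce completion time increases. In $\sigma^*$, every task in $\{R_1,\ldots,R_{\ell-1}\}\cup B_1\cup\cdots\cup B_\ell$ must already be completed on $i$ before $R_\ell$ can start: the Reduce tasks by the chosen ordering, and each block $B_{\ell'}$ with $\ell'\le\ell$ because it must finish before $R_{\ell'}$, which starts no later than $R_\ell$. Summing their processing times gives
\[
C^{\sigma^*}_{R_\ell}\;\ge\;\sum_{\ell'=1}^{\ell}\bigl(|B_{\ell'}|+|R_{\ell'}|\bigr)\;=\;C^{\sigma'}_{R_\ell},
\]
where $|\cdot|$ denotes total processing time on $i$ and the equality uses the absence of idle time in $\sigma'$. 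Since only Reduce completions enter the objective, $\sigma'$ is also optimal. I expect the only subtle point to be noticing that this clean rearrangement is available \emph{only because} no Map-precedence deadline constrains when a Shuffle task may be delayed; this latitude is specific to the Shuffle--Reduce subproblem and would fail in the full Map--Shuffle--Reduce setting, where such a direct block-packing would in general be infeasible.
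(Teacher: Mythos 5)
Your proof is correct and rests on the same core idea as the paper's: since every Shuffle task in $s_j^r$ must in any case complete before $\mathcal{T}_{r,j}$ starts (and, on a single processor, so must all earlier Reduce tasks and their Shuffle blocks), packing each block $B_\ell$ contiguously right before $R_\ell$ cannot increase any Reduce completion time. The only stylistic difference is that you perform one global block rearrangement with an explicit lower-bound inequality, whereas the paper reaches the same schedule by a finite sequence of local shifts; both are valid.
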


\begin{proof}
(i) Consider a feasible schedule $\sigma$, then there are three cases in which an idle time can occur: either between the execution of two Shuffle tasks or two Reduce tasks or between a Shuffle and a Reduce task. Since all Shuffle tasks and Reduce tasks are assumed to be available from time zero and there are no precedence constraints among only Shuffle tasks or only Reduce tasks, skipping the idle times in the first two cases only decreases the objective value of $\sigma$. For the third case, it suffices to notice that since Shuffle tasks precede their corresponding Reduce tasks, by skipping the idles we decrease the completion time of the Reduce tasks and thus the objective value of $\sigma$. Hence, $\sigma$ can be transformed into a schedule of less or equal total weighted completion time.\\
(ii) Again we consider a schedule $\sigma$ that violates the claim and has the last Reduce task $\mathcal{T}_{k,j}$ of a job $j$ completed on some processor $i \in \mathcal{P}_{\mathcal{R}}$. If we fix the completion time of $\mathcal{T}_{k,j}$ and shift all Shuffle tasks in $s_j^r$ to execute just before $\mathcal{T}_{k,j}$, consecutively and in arbitrary order, then, the completion time of $j$ remains unchanged, while that of every task preceding $\mathcal{T}_{k,j}$ in $\sigma$ may decrease. Thus, after a finite number of shifts, $\sigma$ can be transformed into a schedule of less or equal objective value. \qed\end{proof}

By Lemma \ref{le:s-ropt} we are able to reformulate our input so as to incorporate the execution of Shuffle tasks of each job into the execution of its Reduce tasks. More specifically, for each Reduce task $\mathcal{T}_{r,j}$ of a job $j$, for $1\leq r\leq \tau_j$, we increase its processing time $p_{i,r,j}$, on each processor $i\in\mathcal{P}_\mathcal{R}$, by a quantity equal to the total processing time of the Shuffle tasks in $s_j^r$, i.e., $p(s_j^r) = \sum_{\mathcal{T}_{r,k,j}\in s_j^r}t_{r,k,j}$. Let $p_{i,r,j}' = p_{i,r,j}  + p(s_j^r)$ be the increased processing time for each task $\mathcal{T}_{r,j}\in \mathcal{R}$ on processor $i\in\mathcal{P}_\mathcal{R}$, referred as Shuffle-Reduce task.
Let $\mathcal{R}_{\mathcal{H}}$ be the new set of Shuffle-Reduce tasks.
Then, by running Algorithm \textsc{TaskScheduling}($\mathcal{R}_{\mathcal{H}}$) and applying Theorem \ref{thm:maporreduce} we compute a $27/2$-approximate schedule for scheduling the Shuffle-Reduce tasks of $\mathcal{R}_{\mathcal{H}}$. It is not difficult to prove that a schedule produced by \textsc{TaskScheduling}($\mathcal{R}_{\mathcal{H}}$), satisfies Properties (iii)-(v) and thus it is feasible for scheduling Shuffle-Reduce tasks.

In order to merge the two obtained schedules (the one for the Map tasks with the one for Shuffle-Reduce tasks) we note that it suffices to consider the same  precedence constraints, for Map tasks and Shuffle-Reduce tasks, as the ones among Map tasks and Reduce tasks (see Fig.\ref{fig:shuffle}(ii)). The latter dependencies are clearly more general than the precedence constraints between Map tasks and Shuffle tasks of each job (each Shuffle task $\mathcal{T}_{r,k,j}$ cannot start executing before the completion of Map task $\mathcal{T}_{k,j}$) since in order to start the execution of all Shuffle tasks in $s_j^r$ we have to wait for all Map tasks $\mathcal{T}_{k,j}$ of job $j$ to complete. However, it satisfies Property (i), and as we note $C^{OPT(b)}_j$ is a lower bound on $C^{OPT}_j$\footnote{Where $C^{OPT}_j$ is the completion time of job $j$ in the overall optimal schedule and $C^{OPT(b)}_j$ the completion time in optimal schedules of either the Map tasks or the Shuffle-Reduce tasks separately.} for any kind of precedence constraints between Map tasks and Shuffle-Reduce tasks and thus by running \mrs we yield that:
\begin{theorem}
\mrs is a $54$-approximation for the \msr problem.
\end{theorem}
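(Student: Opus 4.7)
My plan is to mirror the proof of Theorem~\ref{thm:mrs} almost verbatim, with the only substantive change being to argue that working with the inflated Shuffle--Reduce processing times $p'_{i,r,j}=p_{i,r,j}+p(s^r_j)$ does not destroy the lower bound we need on the optimum. So I would structure the proof in three short steps.

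First, I would establish feasibility for the MSR problem. Running \msrs on the instance where the Reduce tasks of $\mathcal{P}_{\mathcal R}$ are replaced by Shuffle--Reduce tasks produces a schedule in which, for each Reduce processor, whenever a Shuffle--Reduce task of job $j$ starts, it is executed for an uninterrupted block of length $p'_{i,r,j}$. I would interpret this block as: first executing the $|s^r_j|$ Shuffle tasks of $s^r_j$ back-to-back in arbitrary order, then the actual Reduce task $\mathcal{T}_{r,j}$; this is exactly the structure of the optimal schedule identified in Lemma~\ref{le:s-ropt}, so Properties (iii) and (iv) are met automatically. Property~(i) (no Shuffle task may start before its parent Map task completes) follows from the same width argument used for Theorem~\ref{thm:mrs}: the \msrs algorithm guarantees that every Map task of job $j$ finishes by time $\omega_j$, and every Shuffle--Reduce task of job $j$ has release time $\omega_j$, so in particular the Shuffle subtasks that sit at the front of each Shuffle--Reduce block start after \emph{all} Map tasks of $j$ are done — which is strictly stronger than Property~(i) demands.

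Second, I would bound the completion times. Reusing the two paragraphs inside the proof of Theorem~\ref{thm:mrs} verbatim (the Map side is unchanged, and the Reduce side now uses the inflated processing times), I obtain
\[
C_j^{\msrs}\ \le\ 2\omega_j\ =\ 2\max\{C_j^{\sigma_{\mathcal M}},\, C_j^{\sigma_{\mathcal R_{\mathcal H}}}\},
\]
where $\sigma_{\mathcal R_{\mathcal H}}$ is the schedule returned by \textsc{TaskScheduling}$(\mathcal{R}_{\mathcal H})$.

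Third, I would argue that the Map-only and Shuffle--Reduce-only optima are still lower bounds on the joint optimum of the MSR problem. For the Map side this is immediate. For the Shuffle--Reduce side, I would take any feasible schedule $\pi^{\ast}$ of the original MSR problem and, for each Reduce processor, apply Lemma~\ref{le:s-ropt} to coalesce each Shuffle set $s^r_j$ with its Reduce task $\mathcal{T}_{r,j}$ without increasing total weighted completion time; the resulting schedule is exactly a feasible schedule of the Shuffle--Reduce instance, so $\sum_j w_j C_j^{\mathrm{OPT}(\mathcal R_{\mathcal H})} \le \sum_j w_j C_j^{\mathrm{OPT}}$. Combining this with Theorem~\ref{thm:maporreduce} applied once to the Map tasks and once to the Shuffle--Reduce tasks gives $\sum_j w_j \omega_j \le 27\sum_j w_j C_j^{\mathrm{OPT}}$, and together with the factor~$2$ from the previous step the claimed $54$-approximation follows.

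The only step that requires care — and the one I expect to be the main obstacle — is the feasibility argument in Step~1, specifically verifying Property~(i): one must check that the internal ordering of Shuffle subtasks inside a Shuffle--Reduce block can always be chosen so that no individual Shuffle task $\mathcal{T}_{r,k,j}$ starts before its own Map task $\mathcal{T}_{k,j}$ completes. This is where the width-based release time $\omega_j = \max\{C_j^{\sigma_{\mathcal M}},C_j^{\sigma_{\mathcal R_{\mathcal H}}}\}$ does all the heavy lifting, because by the analysis of \msrs every Map task of $j$ finishes by time $\omega_j$, and every Shuffle--Reduce block of job $j$ starts no earlier than $\omega_j$; hence any ordering of the subtasks inside the block is safe. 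Everything else is essentially a rewrite of the proof of Theorem~\ref{thm:mrs}.
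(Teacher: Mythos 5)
Your proposal is correct and follows essentially the same route as the paper: coalesce each Shuffle set $s_j^r$ with its Reduce task via Lemma~\ref{le:s-ropt} to form inflated Shuffle--Reduce tasks, apply Theorem~\ref{thm:maporreduce} to each side, and rerun the merging/width argument of Theorem~\ref{thm:mrs}, observing that the (stronger) ``all Maps before any Shuffle'' precedence enforced by the merge subsumes Property~(i) and that the separate optima lower-bound the joint optimum. Your Step~1 and Step~3 just spell out details the paper leaves implicit.
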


\subsection{The Shuffle Tasks may be Executed on Different Reduce Processors}

When the Shuffle tasks are executed on different processors,  we prove that we lose only a factor of $2$ in the approximation ratio of the Shuffle-Reduce schedule. We assume that for any Reduce processor $i\in\mathcal{P}_{\mathcal{R}}$, there exits an \textit{input} processor which receives data from the Map processors. Therefore, the input processor executes the Shuffle tasks that correspond to the Reduce tasks which have been assigned to $i$. We call the set of input processors $\mathcal{P}_{\mathcal{S}}$.  Then, we can prove the following.

\begin{lemma}
  Consider two optimal schedules $\sigma$ and $\sigma'$  of Shuffle tasks and Reduce tasks on processors of the set $\mathcal{P}_{\mathcal{R}} \cup \mathcal{P}_{\mathcal{S}}$ and on processors of the set $\mathcal{P}_{\mathcal{R}}$ respectively. Let also $C^{\sigma}_{k,j},  C^{\sigma'}_{k,j}$ be the completion times of any Reduce task $\mathcal{T}_{k,j}$ in $\sigma$ and $\sigma'$ repsectively.
Then, it holds that $C^{\sigma'}_{k,j}\leq 2C^{\sigma}_{k,j}$.\label{le:shuffle_dproc}
\end{lemma}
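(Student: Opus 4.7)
The plan is to exhibit a feasible schedule $\tilde\sigma$ of the Shuffle and Reduce tasks on the Reduce processors $\mathcal{P}_{\mathcal{R}}$ alone whose completion times are at most twice those of $\sigma$; the claim then follows because $\sigma'$ is optimal on $\mathcal{P}_{\mathcal{R}}$ and can only do better than $\tilde\sigma$. For each $i\in\mathcal{P}_{\mathcal{R}}$ with associated input processor $i_S\in\mathcal{P}_{\mathcal{S}}$, let $Q_i$ be the union of the Shuffle tasks executed by $i_S$ and the Reduce tasks executed by $i$ in $\sigma$. In $\tilde\sigma$, process the tasks of $Q_i$ sequentially on processor $i$, in non-decreasing order of their $\sigma$-completion times and with no idle time in between.

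The first thing to verify is that $\tilde\sigma$ is feasible. Properties (iii) and (iv) hold trivially because each $Q_i$ runs on a single processor, so Shuffle tasks destined for $i$ are automatically sequential and non-preemptive. The Shuffle-before-Reduce precedence is preserved as well: in $\sigma$ any Shuffle task $\mathcal{T}_{r,k,j}\in s_j^r$ must complete strictly before the start of the Reduce task $\mathcal{T}_{r,j}$, hence $C^{\sigma}_{r,k,j}<C^{\sigma}_{r,j}$, so sorting $Q_i$ by $C^{\sigma}$ places every Shuffle task earlier than its associated Reduce task in $\tilde\sigma$.

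Next, I would bound the completion times via a simple volume argument. Fix any task $\mathcal{T}\in Q_i$ and let $t=C^{\sigma}_{\mathcal{T}}$. By construction, every task that runs on $i$ in $\tilde\sigma$ up to and including $\mathcal{T}$ belongs to the set $\{\mathcal{T}'\in Q_i:C^{\sigma}_{\mathcal{T}'}\le t\}$, and in $\sigma$ these tasks were split between the two distinct processors $i$ and $i_S$. Since each of $i$ and $i_S$ executes at most $t$ units of work by time $t$ in $\sigma$, the total load scheduled before and including $\mathcal{T}$ in $\tilde\sigma$ is at most $2t$, yielding $C^{\tilde\sigma}_{\mathcal{T}}\le 2t=2C^{\sigma}_{\mathcal{T}}$. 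Applied to any Reduce task, this gives $C^{\tilde\sigma}_{k,j}\le 2C^{\sigma}_{k,j}$, which together with the optimality of $\sigma'$ on $\mathcal{P}_{\mathcal{R}}$ delivers $C^{\sigma'}_{k,j}\le 2C^{\sigma}_{k,j}$ as stated (or, read as a statement about the objective, $\sum_j w_j C^{\sigma'}_j\le 2\sum_j w_j C^{\sigma}_j$).

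The main obstacle is purely the bookkeeping in the feasibility check, namely convincing oneself that sorting $Q_i$ by $\sigma$-completion times never violates the Shuffle-before-Reduce precedence and never creates an overlap between Shuffle tasks targeting the same Reduce processor. Both follow from the fact that $\sigma$ itself respects these constraints and that $\tilde\sigma$ merely consolidates two disjoint processors of $\sigma$ onto one; the volume bound then adds up cleanly because the two loads live on different physical processors in $\sigma$ and are therefore counted without overlap.
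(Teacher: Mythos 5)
Your proof is correct and follows essentially the same route as the paper: both arguments collapse the Reduce processor $i$ and its input processor $i_S$ onto $i$ alone, observe that the two loads were disjoint in $\sigma$, and charge each of them separately against $C^{\sigma}_{k,j}$ to get the factor $2$. The only (cosmetic) difference is that you order the merged tasks by their $\sigma$-completion times and use a prefix-volume bound, whereas the paper keeps the Reduce order of $\sigma$ and inserts each Reduce task's Shuffle tasks immediately before it, bounding the completion time by the sum of the two per-processor loads.
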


\begin{proof}
    We start with optimal schedule $\sigma$ on the $\mathcal{P}_{\mathcal{R}} \cup \mathcal{P}_{\mathcal{S}}$ processors. We fix a Reduce processor $i^r$, the corresponding input  processor $i^s$ and a Reduce task $\mathcal{T}_{k,j} \in \mathcal{R}$ of a job $j \in \mathcal{J}$. We build the schedule $\sigma'$ on the $i^r$ processor by executing the Reduce tasks in the same order as in $\sigma$ and just before a Reduce task, we execute the corresponding Shuffle tasks. Let $B(k)$ be the set of Reduce tasks executed on processor $i^r$, before $\mathcal{T}_{k,j}$ and $Sh(k)$ the set of the shuffle tasks that correspond to the Reduce tasks $B(k) \cup \{\mathcal{T}_{k,j}\}$ . Then, we have that
    \[C^{\sigma'}_{k,j} = \sum_{\mathcal{T}_{l,j} \in B(k)} p_{i^r, l,j} + \sum_{\substack{\mathcal{T}_{q,l,j} \in Sh(k)\\1\le q\le \tau_j}} t_{q, l,j},\]
    which holds since in $\sigma'$ there is no idle time, as already shown in Lemma \ref{le:s-ropt}. Moreover, since both $B(k)$ and $Sh(k)$ have to complete before $\mathcal{T}_{k,j}$ in $\sigma$, we have that
    \[C^{\sigma}_{k, j} \ge \max \left\{ \sum_{\mathcal{T}_{l,j} \in B(k)} p_{i^r, l,j}, \sum_{\substack{\mathcal{T}_{q,l,j} \in Sh(k)\\1\le q\le \tau_j}} t_{q, l,j} \right\}\]
    and therefore $C^{\sigma'}_{k, j} \le 2 C^{\sigma}_{k,j}$. \qed
\end{proof}

Therefore, if we assume the existence of the $\mathcal{P}_{\mathcal{S}}$ processors then, combining Lemma \ref{le:shuffle_dproc} with Theorem \ref{thm:maporreduce}  we yield a $27$-approximation algorithm for scheduling the Shuffle-Reduce tasks.

Then, by running \mrs  in order to combine this schedule with the schedule of the Map tasks, using the same analysis as before, we get the next corollary. Note that the Shuffle tasks here form a special third stage in the FFS problem.

\begin{corollary}
\mrs is a $81$-approximation for the \msr problem, when the Shuffle tasks run on different processors of the Reduce tasks.
\end{corollary}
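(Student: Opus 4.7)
The plan is to bolt together three ingredients that are already in hand: Theorem~\ref{thm:maporreduce}, Lemma~\ref{le:shuffle_dproc}, and the merging analysis from the proof of Theorem~\ref{thm:mrs}. The target arithmetic is $2(27/2 + 27) = 81$, where the outer factor of $2$ comes from the \mrs concatenation, the summand $27/2$ is the Map-stage approximation, and the summand $27$ is the Shuffle-Reduce-stage approximation once the shuffle processors $\mathcal{P}_{\mathcal{S}}$ are taken into account.

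First I would process the Shuffle-Reduce stage. As in the previous subsection, I define the Shuffle-Reduce tasks in $\mathcal{R}_{\mathcal{H}}$ by lumping each Reduce task with the Shuffle tasks that must run on the same Reduce processor, and invoke \textsc{TaskScheduling}$(\mathcal{R}_{\mathcal{H}})$ on $\mathcal{P}_{\mathcal{R}}$. By Theorem~\ref{thm:maporreduce} this produces a schedule $\sigma_{\mathcal{S}\mathcal{R}}$ whose total weighted completion time is at most $27/2$ times the optimum restricted to running on $\mathcal{P}_{\mathcal{R}}$ alone; call this restricted optimum $OPT^{\mathcal{R}}$. Applying Lemma~\ref{le:shuffle_dproc} task-by-task and summing with weights, $OPT^{\mathcal{R}}\le 2\,OPT^{\mathcal{R}\cup\mathcal{S}}$, so $\sigma_{\mathcal{S}\mathcal{R}}$ is a $27$-approximation for the Shuffle-Reduce subproblem on the combined processor set. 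To serve as a feasible schedule for the \msr variant with shuffle on $\mathcal{P}_{\mathcal{S}}$, I reinterpret $\sigma_{\mathcal{S}\mathcal{R}}$ by moving every Shuffle-block preceding a Reduce task from its Reduce processor to the corresponding input processor in $\mathcal{P}_{\mathcal{S}}$; because these blocks were already sequentialized on a single Reduce processor, no conflict arises on $\mathcal{P}_{\mathcal{S}}$, and completion times do not increase. Since the overall \msr optimum $OPT$ dominates $OPT^{\mathcal{R}\cup\mathcal{S}}$, we obtain $\sum_j w_j C_j^{\sigma_{\mathcal{S}\mathcal{R}}} \le 27\,OPT$.

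Next I would compute a $27/2$-approximate Map schedule $\sigma_{\mathcal{M}}$ by invoking \algotask with $b=\mathcal{M}$, and then merge $\sigma_{\mathcal{M}}$ with $\sigma_{\mathcal{S}\mathcal{R}}$ using \mrs, taking the width of each job to be $\omega_j=\max\{C_j^{\sigma_{\mathcal{M}}},C_j^{\sigma_{\mathcal{S}\mathcal{R}}}\}$. Replaying the argument of Theorem~\ref{thm:mrs} gives a non-preemptive schedule that respects the full Map$\to$Shuffle$\to$Reduce precedence (a Reduce task is only released at time $\omega_j$, and by the same per-processor volume bound every Map or Shuffle task of width at most $\omega_j$ is already finished by $\omega_j$), together with the completion-time bound $C_j \le 2\omega_j \le 2(C_j^{\sigma_{\mathcal{M}}} + C_j^{\sigma_{\mathcal{S}\mathcal{R}}})$. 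Summing with $w_j$ and combining the Map and Shuffle-Reduce approximation guarantees yields $\sum_j w_j C_j \le 2(27/2 + 27)\,OPT = 81\,OPT$.

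The step I expect to need the most care is verifying that the \mrs merging argument survives unchanged when the second schedule lives on $\mathcal{P}_{\mathcal{R}}\cup\mathcal{P}_{\mathcal{S}}$ rather than on $\mathcal{P}_{\mathcal{R}}$ alone. The argument in Theorem~\ref{thm:mrs} is essentially a per-processor volume bound: on any processor, the total work executed by time $t$ is at most the total work of tasks assigned to that processor with width at most $t$, which by the definition of $\omega_j$ is bounded by $t$ itself. Because the Shuffle tasks enter $\omega_j$ via $\sigma_{\mathcal{S}\mathcal{R}}$, the same bound applies on each processor in $\mathcal{P}_{\mathcal{S}}$ as on the Reduce processors, so no separate argument for the third stage is required and the factor-of-$2$ concatenation loss stays intact.
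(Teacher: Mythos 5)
Your proposal is correct and follows essentially the same route as the paper: a $27/2$-approximate Shuffle--Reduce schedule on $\mathcal{P}_{\mathcal{R}}$, upgraded to a $27$-approximation against the $\mathcal{P}_{\mathcal{R}}\cup\mathcal{P}_{\mathcal{S}}$ optimum via Lemma~\ref{le:shuffle_dproc}, then merged with the Map schedule by \mrs at a further factor of $2$, giving $2(27/2+27)=81$. Your extra remarks --- relocating the sequentialized Shuffle blocks onto the input processors and checking that the \mrs volume argument is unaffected --- only make explicit steps the paper leaves implicit ("using the same analysis as before").
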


\section{Conclusions}

We presented constant-approximation algorithms for scheduling a set of MapReduce jobs on unrelated processors in order to minimize their total weighted completion time.
These are the first constant-approximation algorithms for a general setting of the FFS problem while also, according to our knowledge, this is the most general theoretical model for MapReduce scheduling that have been studied so far.
%special cases of the minimum total weighted completion time under non-tree-like precedence constraints on unrelated processors.
%
%In our analysis, we assumed for simplicity that the set of Map processors is different (and disjoint) from that of Reduce processors. However, we are able to %extend our results in the case where the two sets of processors are not necessarily disjoint (or even are the same).
%
%Concerning the tightness of our analysis, for each individual rounding step in \algotask, where we round an optimal solution to $LP(b)$ combining the %filtering technique of Lin and Vitter~\cite{LinV92} with the makespan minimization algorithm of~\cite{LenstraST90}, is tight; as it has already been proved by Correa et al.~\cite{CorreaSV12}, the loss of a factor equal to 4, imposed when applying filtering in~\cite{LinV92} in order to turn any preemptive schedule to a nonpreemptive one for the makespan minimization on unrelated processors, is tight.
%Note that, in inequality \ref{eq:1}, this is also our bound for the schedule of \algotask.
%makespan LP of~\cite{LawlerL78} (which we also use) has integrality gap 4, and moreover, the rounding according to Lenstra et %al.~\cite{} mixed with the filtering technique of~\cite{LinV92} is the best possible.
%However, when we consider all parts of \algotask and \mrs, it seems complicated to find a tight example.

An interesting direction for future work concerns the online case of the problem.
As noticed in~\cite{MoseleyDKS11}, even when preemption is allowed, resource augmentation is essential for a reasonable competitive ratio. However, the idea of task preemption in MapReduce implementations is usually quite different from that in classical CPU scheduling. More specifically, when a task is suspended, it does not resume at a later time, but it is forced to start over again (see e.g., \cite{ZahariaBSESS09}). 
% ,SandholmL10}).
This fact, reflects on different online scheduling models, e.g., the preemption-restart~\cite{ShmoysWW95}. %,HoogeveenPW00}

\begin{thebibliography}{10}

\bibitem{ASSU13}
F.N. Afrati, A.~Das Sarma, S.~Salihoglu, and J.D. Ullman.
\newblock {Upper and Lower Bounds on the Cost of a MapReduce Computation}.
\newblock {\em Proceedings of VLDB}, 6(4):277--288, 2013.

\bibitem{AU09}
F.N. Afrati and J.D. Ullman.
\newblock Optimizing multiway joins in a map-reduce environment.
\newblock {\em IEEE Transactions on Knowledge and Data Engineering},
  23(9):1282--1298, 2011.

\bibitem{ChangKKLLM11}
H.~Chang, M.~S. Kodialam, R.~R. Kompella, T.~V. Lakshman, M.~Lee, and
  S.~Mukherjee.
\newblock Scheduling in mapreduce-like systems for fast completion time.
\newblock In {\em {IEEE} Proceedings of the 30th International Conference on
  Computer Communications}, pages 3074--3082, 2011.

\bibitem{ChenKL12}
F.~Chen, M.~S. Kodialam, and T.~V. Lakshman.
\newblock Joint scheduling of processing and shuffle phases in mapreduce
  systems.
\newblock In {\em {IEEE} Proceedings of the 31st International Conference on
  Computer Communications}, pages 1143--1151, 2012.

\bibitem{CorreaSV12}
J.~R.~Correa, M.~Skutella, J.~Verschae.
\newblock The power of preemption on unrelated machines and applications to
  scheduling orders.
\newblock {\em Mathematics of Operations Research}, 37(2):379--398, 2012.

\bibitem{DeanG04}
J.~Dean and S.~Ghemawat.
\newblock Mapreduce: Simplified data processing on large clusters.
\newblock In {\em Proceedings of the 6th Symposium on Operating System Design
  and Implementation}, pages 137--150, 2004.

\bibitem{GareyJS76b}
M.~R. Garey, D.S. Johnson, and R.~Sethi.
\newblock The complexity of flowshop and jobshop scheduling.
\newblock {\em Mathematics of Operations Research}, 1(2):117--129, 1976.

\bibitem{GonzalezS78}
T.~Gonzalez and S.~Sahni.
\newblock Flowshop and jobshop schedules: complexity and approximation.
\newblock {\em Operations research}, 26(1):36--52, 1978.

\bibitem{HallSSW97}
L.~A. Hall, A.S. Schulz, D.~B. Shmoys, and J.~Wein.
\newblock Scheduling to minimize average completion time: Off-line and on-line
  approximation algorithms.
\newblock {\em Mathematics of Operations Research}, 22:513--544, 1997.

\bibitem{KSV10}
H.J. Karloff, S.~Suri, and S.~Vassilvitskii.
\newblock {A Model of Computation for MapReduce}.
\newblock In {\em Proceedings of the 21st {ACM-SIAM} Symposium on Discrete
  Algorithms (SODA~'10)}, pages 938--948, 2010.

\bibitem{MastrolilliQSSU10}
M.~Mastrolilli, M.~Queyranne, A.~S. Schulz, O.~Svensson, and N.~A. Uhan.
\newblock Minimizing the sum of weighted completion times in a concurrent open
  shop.
\newblock {\em Operations Research Letters}, 38(5):390--395, 2010.

\bibitem{MoseleyDKS11}
B.~Moseley, A.~Dasgupta, R.~Kumar, and T.~Sarl{\'o}s.
\newblock On scheduling in map-reduce and flow-shops.
\newblock In {\em Proceedings of the 23rd {ACM} Symposium on Parallel
  Algorithms and Architectures (SPAA)}, pages 289--298, 2011.

\bibitem{Hadoop}
Apache~Hadoop Project.
\newblock Powered by hadoop.
\newblock In {\em http://wiki.apache.org/hadoop/PoweredBy}, 2011.

\bibitem{SchuurmanW00}
P.~Schuurman and G.~J. Woeginger.
\newblock A polynomial time approximation scheme for the two-stage
  multiprocessor flow shop problem.
\newblock {\em Theoretical Computer Science}, 237(1):105--122, 2000.

\bibitem{ShmoysWW95}
D.~B. Shmoys, J.~Wein, and D.~P. Williamson.
\newblock Scheduling parallel machines on-line.
\newblock {\em {SIAM} Journal on Computing}, 24(6):1313--1331, 1995.

\bibitem{ST93}
D.B. Shmoys and {\'E}.~Tardos.
\newblock An approximation algorithm for the generalized assignment problem.
\newblock {\em Mathematical Programming}, 62:461--474, 1993.

\bibitem{Ullman12}
J.~D. Ullman.
\newblock Designing good mapreduce algorithms.
\newblock {\em XRDS: Crossroads, The ACM Magazine for Students}, 19(1):30--34,
  2012.

\bibitem{YooS11}
D.-J. Yoo and K.~M. Sim.
\newblock A comparative review of job scheduling for mapreduce.
\newblock In {\em {IEEE} Proceedings of the International Symposium on Cloud
  Computing and Intelligece Systems}, pages 353--358, 2011.

\bibitem{ZahariaBSESS09}
M.~Zaharia, D.~Borthakur, J.~S. Sarma, K.~Elmeleegy, S.Shenker, and I.~Stoica.
\newblock Job scheduling for multi-user mapreduce clusters.
\newblock {\em EECS Department, University of California, Berkeley, Tech. Rep.
  UCB/EECS-2009-55}, 2009.

\end{thebibliography}
\end{document}